\theoremstyle{plain}
\newtheorem{thm}{\textbf{Theorem}}
\xpatchcmd{\@thm}{\thm@headpunct{.}}{\thm@headpunct{}}{}{}
\renewenvironment{proof}[1][\proofname]{\par
\pushQED{\qed}%
\normalfont 
\trivlist
\item\relax
{\itshape
#1\@addpunct{:}}\hspace\labelsep\ignorespaces
}{%
\popQED\endtrivlist\@endpefalse
}
\begin{document}

\title{Random Fourier Feature Based Deep Learning for Wireless Communications}
\author{Rangeet~Mitra and Georges~Kaddoum
\thanks{Rangeet Mitra and Georges Kaddoum are with the Resilient Machine learning Institute (ReMI) of the  \'{E}cole de Technologie Sup\'{e}rieure (\'{E}TS), University of Quebec, Montreal, Canada. (Email: rangeet.mitra.1@ens.etsmtl.ca). 

This paper is under review in the IEEE Transactions on Vehicular Technology (Submitted on September 17, 2020). A version of this paper was under was previously submitted to the IEEE Transactions on Neural Networks and Learning Systems on March 20, 2020; however, this was submitted to the IEEE Transactions on Vehicular Technology after its decision on Sept. 1, 2020, which advised rejection of this work due to its better suitability to a Communications journal.}
}
\maketitle
\begin{abstract}
Deep-learning (DL) has emerged as a powerful machine-learning technique for several classic problems encountered in generic wireless communications. Specifically, random Fourier Features (RFF) based deep-learning has emerged as an attractive solution for several machine-learning problems; yet there is a lacuna of rigorous results to justify the viability of RFF based DL-algorithms in general. To address this gap, we attempt to analytically quantify the viability of RFF based DL. Precisely, in this paper, analytical proofs are presented demonstrating that RFF based DL architectures have lower approximation-error and probability of misclassification as compared to classical DL architectures. In addition, a new distribution-dependent RFF is proposed to facilitate DL architectures with low training-complexity. Through computer simulations, the practical application of the presented analytical results and the proposed distribution-dependent RFF, are depicted for various machine-learning problems encountered in next-generation communication systems such as: a) line of sight (LOS)/non-line of sight (NLOS) classification, and b) message-passing based detection of low-density parity check codes (LDPC) codes over nonlinear visible light communication (VLC) channels. Especially in the low training-data regime, the presented simulations show that significant performance gains are achieved when utilizing RFF maps of observations. Lastly, in all the presented simulations, it is observed that the proposed distribution-dependent RFFs significantly outperform RFFs, which make them useful for potential machine-learning/DL based applications in the context of next-generation communication systems.
\end{abstract}
\begin{IEEEkeywords} Random Fourier Features, Deep Learning, Wireless Communications, Distribution-dependent learning
\end{IEEEkeywords}
\section{Introduction}
The capacity of classical machine learning methodologies are limited in terms of learning accurate representations from data and in generalizing over large datasets \cite{bengio1,bengio2009learning}. On the other hand, deep-learning (DL) has emerged as a viable machine-learning paradigm to model a nonlinear/abstract mapping from observations or learning representations from data. Furthermore, DL based algorithms have been successfully deployed in numerous sub-domains like computer-vision, speech processing, natural language processing, wireless communications, and time-series prediction. For various tasks/problems in these sub-domains, several DL-architectures, e.g. multilayer perceptron, convolutional neural network (CNN), and recurrent neural network (RNN) \cite{li2015fpga,krizhevsky2012imagenet,kim2014convolutional} that are optimized using the backpropagation algorithm are proposed. Further, long-short term memory (LSTM) based DL architectures are found to be particuarly viable, as LSTMs address the issue of exploding/vanishing gradients \cite{bengio1994learning,hochreiter1997long} encountered in the backpropagation algorithm for modelling/predicting dynamical systems with memory. However, in spite of DL enjoying widespread deployment for complex machine-learning tasks, deep neural networks (DNN) have been found to be sensitive to hyperparameters like number of layers, number of hidden nodes in each layer and the nature of the activation functions.

On the other hand, classical kernel based learning techniques are well-known for their ability to model high-dimensional representations and for their generalization \cite{scholkopf2001learning,liu2011kernel,chen2013system}, and have fewer hyperparameters that require optimization as compared to DNNs; however they require the representation of the learning-parameter to be expressed as an implicit inner-product in a reproducing kernel Hilbert space (RKHS) using Mercer kernels. The exact nature of the implicit feature-map is unknown; however the feature-map can be well-approximated explicitly using sampling methods like random Fourier features (RFF)
 \cite{8835057,8830377}, which can be further utilized as features for potential DL applications. 
From simulations presented in various works in the literature it is observed that the use of RFF in DNNs, significantly boosts the performance of the DNNs rather than utilizing the indigenous features \cite{mehrkanoon2018deep}. Moreover, the RFFs are approximations of feature maps, which facilitate intrinsically regularized parameter updates. This in turn leads to improved generalization \cite{liu2011kernel,theodoridis2015machine} in RFF based DNN architectures, which has prompted the proposal of several DL based architectures which attempt to highlight the viability of RFF-maps through extensive simulation studies \cite{mehrkanoon2018deep,mehrkanoon2019cross}. 
However, existing works on RFF based DL motivate their results on intuition/simulation examples rather than providing rigorous analytical results to establish the paradigm of RFF based DL. Furthermore, RFFs in general require a large number of dimensions to gain an accurate approximation of an RKHS, which significantly increases the overall computational complexity, and creates a serious implementation bottleneck in the practical deployment of  RFF based DL. Hence, based on this review, we highlight the following novelty points of this work:
\begin{itemize}
\item We seek to quantify the viability of RFF-maps in the context of DNN rigorously. We present our claims in the form of two theorems and provide detailed proofs to justify the benefits of utilizing RFF-mapped observations for DL. 
\item To overcome the computational complexity incurred by RFF-mappings, a distribution-dependent RFF is proposed which outperforms classical RFF in scenarios with low/medium RFF dimensions, and delivers better classification performance with lesser amount of training data and with lesser amount of training-data.  
\end{itemize}
The paradigm of RFF based DNN is tested over the following practical problems encountered in the context of next-generation communication systems: a) line of sight (LOS)/non-line of sight (NLOS) identification for wireless links using LSTM based DNN, and b) Message-passing based low-density parity check (LDPC) decoding over nonlinear visible light communication (VLC) channels. Next, we present existing works on the two aforementioned sub-domains.
\subsection{LOS/NLOS classification for wireless links}
Accurate inference of channel-state is critical for node-localization, and link-adaptation over ad-hoc tactical networks, which necessitates extracting accurate information of the channel-conditions and tracking the users' channels. However, in high mobility scenarios, inferring accurate information about the channel state is quite challenging, mainly due to the time-varying nature of the wireless channel, which significantly impairs localization/degrades the overall wireless link due to detrimental outages caused by NLOS scenarios \cite{mao2018probabilistic,joo2019deep,guvenc2007nlos}. Hence, it is quite essential to develop accurate signal processing algorithms which estimate and track the channel, and also infer the channel type, i.e. LOS vs NLOS such that suitable link-adaptation or network topology selection can be performed \cite{guvencc2007nlos}. 

In this section, we focus on reviewing signal processing algorithms in general for LOS/NLOS detection or localization. Several LOS/NLOS detection methodologies are proposed using popular temporal DL paradigms trained on the receive signal strength indicator (RSSI) \cite{choi2017deep,nguyen2018nlos} based on LSTM/hybrid CNN \cite{goodfellow2016deep}. Apart from this, an LSTM trained with local temporal RSSI features is proposed for indoor localization in \cite{chen2019wifi}. Furthermore, there are unsupervised approaches for LOS/NLOS identification including the use of a Gaussian mixture model for LOS/NLOS classification \cite{fan2019non}. Moreover, the work in \cite{joo2019deep} suggests tracking V2V channels using IEEE 802.11p, and it is particularly highlighted that the NLOS components cause link-outages due to packet losses, and that accurate LOS/NLOS detection is needed to mitigate such losses a-priori by link-adaptation. Furthermore, it is noteworthy that the task of LOS/NLOS detection is more difficult for outdoor scenarios which are characterized by longer ranges, and higher mobility and delay-spread. 

From the above review, it can be concluded that fast and accurate LOS/NLOS identification is essential, and LSTMs have been found useful for LOS/NLOS prediction in general. However, in outdoor scenarios which are characterized by high mobility (i.e. with typically lower coherence time as compared to indoor scenarios), it is quite essential to perform accurate LOS/NLOS predictions with lesser training data. Hence, we outline the following novel points of our work which seek to enhance the accuracy of LOS/NLOS classification using hybrid RFF/LSTM based DNNs in the low training-data regime:
\begin{itemize}
\item This work presents an analytical result that justifies the viability of mapping of the incoming observations to RFF for training neural-network architectures like LSTMs. The gains promised by the presented analytical results are validated using computer simulations for  LOS/NLOS identification of outdoor wireless channels.
\item A new kind of distribution-dependent RFF is proposed which is found to deliver-improved approximation of RKHS  with less number of RFF-dimensions, and hence facilitates low-complexity architectures. When used in conjunction with LSTM, the distribution-dependent RFF is found to achieve better $F_{1}$-score for LOS/NLOS classification which lowers the overall computational complexity for a given error-floor. 
\end{itemize}
\subsection{LDPC-decoding for static VLC channels}
Low-density parity check (LPDC) codes, well-known as one of the capacity-achieving codes, have been widely deployed for both radio-frequency (RF) and VLC based communication systems \cite{ryan2009channel,tang2015analysis}. However, in the context of VLC, the algebraic structure of the codewords is distorted due to nonlinear LED transfer-characteristics, which, if unmitigated, severely impairs message-passing based detection. In this work, the observed codewords from a nonlinear VLC channel are iteratively mapped to an approximate RKHS using RFF to mitigate the LED nonlinearity and recover the codewords, where message-priors are iteratively updated based on the ``quality" of RFF-approximation at each iteration. From the conducted simulations over nonlinear VLC channels, a significant BER gain is observed when performing message-passing based detection using the RFF-map. Furthermore, one can observe a significant BER-performance gain upon deployment of our proposed distribution-dependent RFFs as compared to RFFs, which renders the proposed RFFs viable.
\section{Overview of RFF based signal processing}
In this section, we provide an overview of RFF based signal processing. Using an implicit feature map to RKHS (denoted by $\bm{\Phi}:\mathbb{R}^{n}\to\mathcal{H}$), and invoking the Representer theorem \cite{scholkopf2001generalized}, an arbitary function $f(\cdot)$ may be represented as the following weighted combination:
\begin{gather}
f(\cdot) = \sum\limits_{\forall j}\beta_{j}\kappa(\mathbf{x}_{j},\cdot),
\end{gather}
where $\mathbf{x}_{j}$ denotes the $j^{th}$ observation, $\beta_{j}$ denote the approximation-weights, and $\kappa(\cdot,\cdot):\mathbb{R}^{n}\times\mathbb{R}^{n}\to\mathbb{R}$ is a continuous and shift-invariant Mercer kernel. Estimating the above representation of $f(\cdot)$ is computationally involved and requires expressing $f(\cdot)$ \textit{only} in terms of Mercer kernels which prevents us from gaining intuitive insights (as opposed to insights provided by intermediate layers of DNN). To reap the benefits of RKHS based approaches (like regularization, generalization etc) by potentially deploying them as features in DNN, the Mercer kernel $\kappa(\cdot,\cdot)$ can be well-approximated as an RFF \cite{bouboulis2017online}. This approximation is motivated by the Bochner's theorem \cite{bochner2005harmonic}, which is restated as below: 
\begin{thm}
A continuous and shift-invariant kernel $\kappa(\cdot,\cdot):\mathbb{R}^{n}\times\mathbb{R}^{n}\to\mathbb{R}$ is positive-definite iff it is the Fourier transform of a Borel measure $\rho(\cdot)$ on $\mathbb{R}^{n}$. 
\end{thm}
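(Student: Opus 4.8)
The plan is to prove the two implications of the stated equivalence separately, treating the shift-invariant kernel as a single-variable function $\psi$ through $\kappa(\mathbf{x},\mathbf{y})=\psi(\mathbf{x}-\mathbf{y})$. The easy direction assumes $\psi$ is the Fourier transform of a non-negative finite Borel measure $\rho$, i.e. $\psi(\bm{\delta})=\int_{\mathbb{R}^n} e^{i\bm{\omega}^T\bm{\delta}}\,d\rho(\bm{\omega})$, and establishes positive-definiteness directly. For arbitrary points $\mathbf{x}_1,\dots,\mathbf{x}_m$ and scalars $c_1,\dots,c_m$, I would substitute the integral representation into the quadratic form and interchange the finite sum with the integral to obtain $\sum_{j,k} c_j \bar c_k \psi(\mathbf{x}_j-\mathbf{x}_k)=\int_{\mathbb{R}^n}\big|\sum_j c_j e^{i\bm{\omega}^T\mathbf{x}_j}\big|^2 d\rho(\bm{\omega})\ge 0$, which is non-negative because the integrand is non-negative and $\rho$ is a non-negative measure. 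This half is immediate and requires no further machinery.

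The substantive direction is the converse: a continuous positive-definite $\psi$ must be the Fourier transform of such a measure. First I would upgrade the discrete positive-definiteness to its integral form, $\iint f(\mathbf{x})\overline{f(\mathbf{y})}\,\psi(\mathbf{x}-\mathbf{y})\,d\mathbf{x}\,d\mathbf{y}\ge 0$ for every Schwartz function $f$, by approximating the double integral with Riemann sums and invoking the continuity and boundedness of $\psi$ (recall $|\psi(\bm{\delta})|\le\psi(\mathbf{0})$). Because $\psi$ itself need not be integrable, I would then regularize by defining the candidate spectral density $p_\epsilon(\bm{\omega})=(2\pi)^{-n}\int_{\mathbb{R}^n} e^{-i\bm{\omega}^T\bm{\delta}}\,\psi(\bm{\delta})\,e^{-\epsilon\|\bm{\delta}\|^2/2}\,d\bm{\delta}$, which is well defined since the Gaussian damping renders the integrand absolutely integrable.

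The key computation is showing $p_\epsilon(\bm{\omega})\ge 0$. I would exploit the fact that the Gaussian weight factors as a self-convolution of a Gaussian $\phi$, rewrite $e^{-\epsilon\|\bm{\delta}\|^2/2}$ accordingly, and apply Fubini so that $p_\epsilon(\bm{\omega})$ takes the form $c\iint \psi(\mathbf{u}-\mathbf{v})\,F(\mathbf{u})\overline{F(\mathbf{v})}\,d\mathbf{u}\,d\mathbf{v}$ with $F(\mathbf{u})=e^{-i\bm{\omega}^T\mathbf{u}}\phi(\mathbf{u})$; non-negativity then follows at once from the integral positive-definiteness established above. Fourier inversion applied to the Schwartz function $\psi(\bm{\delta})e^{-\epsilon\|\bm{\delta}\|^2/2}$ gives $\psi(\bm{\delta})e^{-\epsilon\|\bm{\delta}\|^2/2}=\int_{\mathbb{R}^n}e^{i\bm{\omega}^T\bm{\delta}}\,p_\epsilon(\bm{\omega})\,d\bm{\omega}$, and evaluating at $\bm{\delta}=\mathbf{0}$ shows each measure $d\rho_\epsilon=p_\epsilon\,d\bm{\omega}$ is non-negative with total mass $\psi(\mathbf{0})$, uniformly in $\epsilon$.

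The main obstacle I anticipate is the limiting argument, where one must guarantee that no mass escapes to infinity as $\epsilon\to 0$. I would establish tightness of the family $\{\rho_\epsilon\}$ from the continuity of $\psi$ at the origin, then invoke the Helly/Prokhorov selection theorem to extract a subsequence converging weakly to a finite non-negative Borel measure $\rho$. Since $e^{i\bm{\omega}^T\bm{\delta}}$ is bounded and continuous in $\bm{\omega}$, weak convergence passes through the integral, and letting $\epsilon\to 0$ on both sides yields $\psi(\bm{\delta})=\int_{\mathbb{R}^n}e^{i\bm{\omega}^T\bm{\delta}}\,d\rho(\bm{\omega})$ (with $\rho$ symmetric, so the representation reduces to a real cosine transform consistent with the real-valued kernel). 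Making the tightness step rigorous, rather than the merely formal positivity calculation, is the technical crux, because it is precisely what rules out a defective limit and certifies $\rho$ as a genuine Borel measure.
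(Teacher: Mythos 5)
The paper does not prove this statement at all: Theorem~1 is presented as a verbatim restatement of Bochner's theorem with a citation to the harmonic-analysis literature, so there is no in-paper argument to compare yours against. Your sketch is the standard textbook proof of Bochner's theorem and its architecture is sound: the easy direction via the quadratic form $\int |\sum_j c_j e^{i\bm{\omega}^T\mathbf{x}_j}|^2\,d\rho(\bm{\omega})\ge 0$ is correct; the passage from discrete to integral positive-definiteness, the Gaussian regularization $p_\epsilon$, the self-convolution trick to show $p_\epsilon\ge 0$, the mass bound $\int p_\epsilon = \psi(\mathbf{0})$, and the tightness/Prokhorov extraction (which is essentially L\'evy's continuity theorem) are all the right ingredients, and you correctly identify tightness as the point where the argument could fail (vague convergence alone would not let you integrate the non-$C_0$ test function $e^{i\bm{\omega}^T\bm{\delta}}$ against the limit). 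One repairable inaccuracy: $\psi(\bm{\delta})e^{-\epsilon\|\bm{\delta}\|^2/2}$ is not a Schwartz function, since $\psi$ is only continuous and bounded; it is merely $L^1$ and continuous, so naive Fourier inversion does not apply directly. The usual fix is to first establish $p_\epsilon\ge 0$, deduce $p_\epsilon\in L^1$ from the continuity of the mollified kernel at the origin (or by a second Gaussian mollification and Fatou), and only then invoke inversion; with that adjustment your plan goes through. You also correctly tighten the paper's loose statement by requiring the Borel measure to be non-negative and finite, without which the ``if'' direction is false.
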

Using Bochner's theorem, a positive-definite kernel can be expressed as
\begin{gather}
\kappa(\bm{a},\bm{b}) = \int\limits_{\bm{\omega}}\exp(j\bm{\omega}^{T}[\bm{a}-\bm{b}])\rho(\bm{\omega})d\bm{\omega}.
\end{gather}
Denoting $\hat{\bm{\Phi}}_{\bm{\omega}}(\mathbf{a}) = \exp(j\bm{\omega}^{T}\mathbf{a})$, one can re-write $\kappa(\bm{a},\bm{b})$ (where $\hat{\bm{\Phi}}_{\bm{\omega}}:\mathbb{R}^{n}\to \mathbb{R}^{n_{G}}$) as
\begin{gather}
\kappa(\bm{a},\bm{b})=\mathbb{E}_{\bm{\omega}\sim\rho(\bm{\omega})}[\hat{\bm{\Phi}}_{\bm{\omega}}(\bm{a})\hat{\bm{\Phi}}_{\bm{\omega}}(\bm{b})].
\end{gather}
Lastly, to lower the approximation-error, the above mean may be approximated by an sample average such as
\begin{gather}
\kappa(\bm{a}-\bm{b})\approx\frac{1}{n_{G}}\sum\limits_{i=1}^{n_{G}}\hat{\bm{\Phi}}_{\bm{\omega}}(\bm{a})^{T}\hat{\bm{\Phi}}_{\bm{\omega}}(\bm{b}).
\end{gather}
Further bounds on the error in kernel-approximation were derived in \cite{rahimi2008random} using the RFF based approximation of feature-maps. In particular, for a real Gaussian kernel, an RFF (denoted here by $\hat{\bm{\Phi}}:\mathbb{R}^{n}\to\mathbb{R}^{n_{G}}$) is obtained as
\begin{gather}
\label{RFF}
\hat{\bm{\Phi}}(\textbf{x}) = \sqrt{\frac{2}{n_{G}}}\begin{bmatrix} 
    \cos(\bm{\omega}_{1}^{T}\mathbf{x}+b_{1}) \\
    \vdots  \\
    \cos(\bm{\omega}_{n_{G}}^{T}\mathbf{x}+b_{n_{G}})  
    \end{bmatrix},
\end{gather}
where each $\{\bm{\omega}_{i}\}_{i=1}^{n_{G}}$ is a Gaussian vector, with zero mean and covariance $\frac{1}{\sigma^2}\mathbb{I}_{n_{G}}$, with $\mathbb{I}_{n_{G}}$ denoting the identity matrix of size $n_{G}$, {$\sigma$ denotes the kernel-width, and $(\cdot)^{T}$ denotes the transpose operation.

It is noted that since an RKHS is closed, an exact representation exists for a wide class of functions in an RKHS \cite{scholkopf2001generalized}, which makes RKHS based learning methods suitable for function-approximation. However, most RKHS techniques rely on learning a dictionary of observations \cite{bouboulis2011extension,chen2012quantized,liu2011kernel}, and hence are sensitive to inclusion of erroneous entries due to noise in the initial learning-stages, and also makes practical implementation complex. In this regard, RFFs provide an approximate explicit map to RKHS, which facilitates practical implementations with a finite memory budget whilst delivering equivalent performance as promised by RKHS based learning algorithms, which make them promising for RFF based DNN-architectures.    
\section{Proof of viability of RFF for training LSTM}
In this section, a proof is outlined that guarantees the viability of RFF based LSTMs in terms of classification accuracy/optimizing the hit-or-miss cost function. 
First of all, we enlist the considered assumptions as follows:
\begin{itemize}
\item We consider two kinds of sequences of observations: 
a) Independent and identically distributed (i.i.d) observations in $\mathbb{C}^{n}$ denoted by $\mathbf{s} = (\mathbf{x}_{1},\mathbf{x}_{2},\cdots\mathbf{x}_{i}\cdots)$ and b) Observations mapped to RKHS using RFF denoted by $\bm{\zeta}=(\bm{\hat{\Phi}}_{1},\bm{\hat{\Phi}}_{2},\cdots\bm{\hat{\Phi}}_{i}\cdots)$.

\item We denote the linear inner-product space in $\mathbb{C}^{n}$ as $\mathcal{X}$. Furthermore, we denote the space spanned by $\bm{\Phi}_{1}$ as $\mathcal{H}$, which can be considered as an extension space of $\mathcal{X}$.

c) The sequence of actual, and predicted labels are denoted by $\mathbf{g}=({g}_{1},{g}_{2},\cdots{g}_{i}\cdots)$ and $\hat{\mathbf{g}}=(\hat{g}_{1},\hat{g}_{2},\cdots\hat{g}_{i}\cdots)$, respectively.

d) We assume that the LSTM is a system that inputs sequences and outputs the labels asymptotically correct for some time-index $i>T$, i.e. 
$\text{Pr}(\hat{g}_{i}=g_{i})\to 1, \forall i>T$, where T is arbitrarily large.

\item Furthermore, we assume two hypothetical cases: 

i) $\mathbf{s}$ is the input to an LSTM  based predictor.

ii) $\bm{\zeta}$ is the input to the same LSTM considered in i).
\end{itemize}

Based on these assumptions we proceed to formulate the following theorem.
\begin{thm}
For a given LSTM network, the likelihood of correct detection of a sequence of labels given the mapped sequence $\bm{\zeta}$, is greater than the likelihood of corresponding correct detection given $\mathbf{s}$ for some time-index $i>T$, where $T$ is large enough.
\end{thm}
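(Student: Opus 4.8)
The plan is to reduce the comparison of the two correct-detection likelihoods to a statement about the linear separability of the two label-classes in their respective input spaces, and then to invoke a Cover-type separability argument together with the extension-space property $\mathcal{X}\subseteq\mathcal{H}$. Concretely, I would first write the per-symbol correct-detection likelihood as $\Pr(\hat{g}_{i}=g_{i}\mid\cdot)=1-P_{e}(\cdot)$, where $P_{e}$ denotes the misclassification probability of the fixed LSTM read-out, and then argue that for a classifier whose terminal decision is a linear separator acting on its input representation, $P_{e}$ is a monotonically decreasing function of the separability of the class-conditional feature distributions. This reframing is what lets the two hypothetical inputs $\mathbf{s}$ and $\bm{\zeta}$ be compared through a single scalar quantity.

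The central step is to show that the RFF-mapped sequence $\bm{\zeta}$ is at least as linearly separable as the raw sequence $\mathbf{s}$, and generically strictly more so. Here I would use two ingredients. First, because $\mathcal{H}$ is an extension of $\mathcal{X}$, every separating functional available in $\mathcal{X}$ remains available in $\mathcal{H}$, so the attainable separation cannot decrease under the map. Second, invoking Cover's theorem on the separability of patterns, the probability that a fixed set of patterns is linearly separable increases with the dimension of the nonlinear feature map; since the RFF map of~\eqref{RFF} lifts the observation from $\mathbb{R}^{2n}$ (the real view of $\mathbb{C}^{n}$) to $\mathbb{R}^{n_{G}}$ with $n_{G}>2n$, the lifted patterns $\bm{\zeta}$ are linearly separable with strictly higher probability. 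The approximation identity $\langle\hat{\bm{\Phi}}(\bm{a}),\hat{\bm{\Phi}}(\bm{b})\rangle\approx\kappa(\bm{a},\bm{b})$ from the RFF construction guarantees that these inner products, and hence the induced margins, are exactly the quantities seen by the LSTM read-out, so the separability gain is genuinely exploited rather than merely latent.

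Having established the separability ordering, I would translate it back into the likelihood inequality. Using the asymptotic-correctness assumption (d), for $i>T$ both likelihoods are already close to unity, so I would work with the residual error and show $P_{e}(\bm{\zeta})<P_{e}(\mathbf{s})$, which is equivalent to the claimed inequality $\Pr(\hat{g}_{i}=g_{i}\mid\bm{\zeta})>\Pr(\hat{g}_{i}=g_{i}\mid\mathbf{s})$. The near-stationarity of the LSTM hidden state for $i>T$ is what permits treating the per-symbol decision as a time-invariant separator, so that the Cover-type bound can be applied symbol-by-symbol rather than only in the limit.

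The hard part, which I would flag explicitly, is the reconciliation with the data-processing inequality: since the cosine-based RFF map is a deterministic and non-injective transformation of $\mathbf{s}$, it cannot increase the Bayes information, so the gain cannot be information-theoretic in origin. The entire argument must therefore be carried by the fixed, limited capacity of the LSTM read-out: it is precisely the restriction to an essentially linear decision in the feature space that makes the higher-dimensional RFF representation strictly advantageous. Making this capacity restriction precise, and ensuring that the ``same LSTM'' of hypothesis (ii) is interpreted as the same decision family rather than literally identical weights across inputs of differing dimension, is the delicate modelling step on which the rigour of the proof ultimately rests.
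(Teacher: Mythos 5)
Your argument is plausible in spirit but follows a genuinely different route from the paper. The paper proves the claim by contradiction: it assumes $p(\hat{\mathbf{g}}=\mathbf{g}\mid\mathbf{s})>p(\hat{\mathbf{g}}=\mathbf{g}\mid\bm{\zeta})$, passes to the class-conditional likelihoods via Bayes' rule and Jensen's inequality, compares the entropies $H(\mathbf{s})$ and $H(\bm{\zeta})$ (asserting $\max H(\mathbf{s})>\max H(\bm{\zeta})$), exploits the concentration of the label posterior around $\hat{\mathbf{g}}$ for $i>T$, and finally invokes a measure-theoretic claim that $p(\mathbf{s})/p(\bm{\zeta})\geq 1$ to flip the assumed inequality and reach a contradiction. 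Your route instead goes through the decision geometry: reduce the likelihood comparison to the misclassification probability of a fixed read-out, order the two input representations by linear separability via Cover's theorem together with the embedding of $\mathcal{X}$ into $\mathcal{H}$, and translate back. What your approach buys is a more honest accounting of where any gain can come from: as you correctly flag, $\bm{\zeta}$ is a deterministic function of $\mathbf{s}$, so the data-processing inequality forbids any improvement for a Bayes-optimal detector, and the advantage must be attributed entirely to the restricted capacity of the LSTM read-out. The paper's entropy and Lebesgue-measure steps never confront this point, which is arguably the weakest link of its own argument.

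Two steps in your sketch would nonetheless need to be filled in. First, Cover's theorem counts the fraction of \emph{all} dichotomies of points in general position that are linearly separable; it does not assert that the \emph{specific} dichotomy induced by the true LOS/NLOS labels becomes more separable after the lift. To obtain that you would need the universality of the Gaussian kernel (density of the induced RKHS in the space of continuous functions on compacta), which is where the real content of the separability claim resides. Second, the passage from ``a better separator exists in $\mathcal{H}$'' to ``this fixed LSTM attains lower $P_{e}$'' is an optimization statement, not a representational one: the theorem fixes the network, so you must either assume the read-out is near-optimal within its linear family for each input representation, or weaken the claim to the best achievable error. The paper's assumption of asymptotic correctness for $i>T$ plays this role implicitly in its own proof; you would need an analogous hypothesis to close yours.
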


\begin{proof}
We provide a proof by contradiction. First we assume the contrapositive, i.e.
\small
\begin{gather}
\label{eq1}
p(\mathbf{\hat{g}}=\mathbf{g}|\mathbf{s})>p(\mathbf{\hat{g}}=\mathbf{g}|\bm{\zeta}).
\end{gather}
\normalsize
Further for $t>T$, we can assume $p(\mathbf{g})\sim\mathcal{N}(\mathbf{\hat{g}},\epsilon^2)$, and $\epsilon^{2}\to 0$ (where $\epsilon^{2}$ is the approximation-error energy) and 
$\mathcal{N}(\mu,\sigma^2)$ denotes a Gaussian distribution with mean $\mu$ and variance $\sigma^2$ (which is a soft approximation of $g$ in the neighborhood of $\hat{g}$).

Under this assumption, one can write (\ref{eq1}) as
\small
\begin{gather}
\label{eq2}
\int_{\mathbf{g}}\frac{p(\mathbf{s}|\mathbf{g})p(\mathbf{g})}{p(\mathbf{s})}d\mathbf{g}
>\int_{\mathbf{g}}\frac{p(\bm{\zeta}|\mathbf{g})p(\mathbf{g})}{p(\mathbf{\bm{\zeta}})}d\bm{\zeta}.
\end{gather}
\normalsize
Taking the logarithm of both sides, applying Jensen's inequality, and the $\min$ operation, one can re-express (\ref{eq2}) as
\small
\begin{eqnarray}
\int\limits_{\mathbf{g}}[\log p(\mathbf{s}|\mathbf{{g}})-\log(p(\mathbf{s}))]\geq\int\limits_{\mathbf{g}}[\log p(\bm{\zeta}|\mathbf{{g}})-\log(p(\bm{\zeta}))]\ .
\end{eqnarray}
\normalsize
Taking the expectation with respect to $\mathbf{s}$ on both sides (denoted by $\mathbb{E}_{\mathbf{s}}[\cdot]$), and letting $-\mathbb{E}_{\mathbf{s}}[\log(p(\mathbf{s})]=H(\mathbf{s})$, where $H(\cdot)$ denotes the Shannon entropy, we can rewrite the above equation as\footnote{For notational convenience, we have overloaded $\bm{\zeta}(\mathbf{x})$ as simply $\bm{\zeta}$}
\small
\begin{eqnarray}
\mathbb{E}_{\mathbf{s}}\Big[\int\limits_{\mathbf{g}}\log[p(\mathbf{s}|\mathbf{{g}})]\Big]+H(\mathbf{s})\geq\mathbb{E}_{\mathbf{s}}\Big[\int\limits_{\mathbf{g}}\log[p(\bm{\zeta}|\mathbf{{g}}]\Big]+H(\bm{\zeta}) .
\end{eqnarray}
\normalsize
Using the fact that
\small
\begin{gather}
\max_{p(\mathbf{s})}H(\mathbf{s})>\max_{p(\mathbf{s})}H(\bm{\zeta}),
\end{gather}
\normalsize
and defining $\theta>0$, such that $\theta=\max\limits_{p(\mathbf{s})}[H(\mathbf{s})-H(\bm{\zeta})]$, we get
\small
\begin{gather}
\mathbb{E}_{\mathbf{s}}\Big(\int_{\mathbf{g}}\log[p(\mathbf{s}|\mathbf{{g}})]\Big)\geq\mathbb{E}_{\mathbf{s}}\Big(\int_{\mathbf{g}}\log[p(\bm{\zeta}|\mathbf{{g}})]\Big)-\theta.
\end{gather}
\normalsize
In other words,
\small
\begin{gather}
\mathbb{E}_{\mathbf{s}}\Big(\int_{\mathbf{g}}[\log p(\bm{\zeta}|\mathbf{{g}})-\log p(\mathbf{s}|\mathbf{{g}})]\Big)\leq\theta.
\end{gather}
\normalsize
For $i>T$, the measure of $\mathbf{g}$ is concentrated around $\hat{\mathbf{g}}$; hence, under this assumption the above equation can be re-expressed as
\small
\begin{gather}
0<\mathbb{E}_{\mathbf{s}}\Big([\log p(\bm{\zeta}|\mathbf{g}=\hat{\mathbf{{g}}})-\log p(\mathbf{s}|\mathbf{g}=\hat{\mathbf{{g}}})]\Big)\leq\theta.
\end{gather}
\normalsize
Hence, we have
\small
\begin{gather}
\log p(\bm{\zeta}|\mathbf{g}=\hat{\mathbf{{g}}})>\log p(\mathbf{s}|\mathbf{g}=\hat{\mathbf{{g}}}),
\end{gather}
\normalsize
or,
\small
\begin{gather}
\log p(\mathbf{g}=\hat{\mathbf{{g}}}|\bm{\zeta})>{\log p(\mathbf{g}=\hat{\mathbf{{g}}}|\mathbf{s})}+\log\Big[\frac{p(\mathbf{s})}{p(\bm{\zeta})}\Big].
\end{gather}
\normalsize
Using the cosine transform-relation of the random variables from $\mathbf{x}$ to $\bm{\zeta}$, we conclude $\frac{p(\mathbf{s})}{p(\bm{\zeta})}\geq1$ (as the Lebesgue measure of $\mathcal{X}$ is less than the Lebesgue measure of $\mathcal{H}$\footnote{It is noted that while there are several extensions from $\mathbf{x}$ possible, the RFF based extension outperforms classical polynomial kernel based extensions \cite{rmitra2020}}), and thus we reach the following inequality:
\small
\begin{gather}
{\log p(\mathbf{g}=\hat{\mathbf{{g}}}|\mathbf{s})}<{\log p(\mathbf{g}=\hat{\mathbf{{g}}}|\bm{\zeta})}.
\end{gather}
\normalsize
This yields a contradiction from our original claim in (\ref{eq1}), which concludes the proof.
\end{proof}
\section{Proposed distribution dependent RFFs}
The RFF based DNNs require a large number of RFFs to gain an accurate approximation of an RKHS. In this context, a distribution-dependent RFF is proposed in this section, which achieves lower approximation error as compared to classical RFFs for a given number of RFF dimensions, and hence significantly lowers the computational complexity required for achieving a given error floor. 

Indexing the incoming observations by $j$, we can update the following Parzen estimate of the p.d.f of observations, denoted by $\hat{p}(\mathbf{x})$.
\begin{gather}
\label{eqnnn}
\hat{p}(\mathbf{x}):= \Big(\frac{j-1}{j}\Big)\hat{p}(\mathbf{x}) + \frac{1}{j}\mathcal{K}_{\lambda}(\mathbf{x}-\mathbf{x}_{j}),
\end{gather}
where the spread parameter for kernel density estimation, $\lambda$, is drawn from Silverman's rule \cite{silverman1986density}.
Consequently, the mean of the RFF can be readily derived, using a moving-average estimator as follows:
\begin{gather}
\bm{\mu}_{\hat{\bm{\Phi}}}:=\nu\bm{\mu}_{\hat{\bm{\Phi}}}+\underbrace{\int_{\mathbf{x}}\hat{\Phi}(\mathbf{x})\hat{p}(\mathbf{x})}_{\bm{\mathcal{S}}},
\end{gather}
where $\bm{\mu}_{\hat{\bm{\Phi}}}$ denotes the mean of the RFF, estimated by the moving-average estimator, and $\nu\in[0,1]$ is the forgetting factor. However, from (\ref{eqnnn}) one can adapt $\mathcal{S}$ as
\begin{gather}
\bm{\mathcal{S}}:=\frac{i-1}{i}\bm{\mathcal{S}} +\frac{1}{i}\frac{1}{M}\int_{\mathbf{x}}\sum\limits_{j=1}^{M}\mathcal{K}_{\lambda}(\mathbf{x}-\mathbf{x}_{j})\bm{\hat{\Phi}}(\mathbf{x})d\mathbf{x}.
\end{gather}
Denoting $\bm{\mathcal{S}}=\{\mathcal{S}_{i}\}_{i=1}^{n_{G}}$ as a vector, we get:
\begin{gather}
\label{DD_RFF}
\mathcal{S}_{i}:=\frac{i-1}{i}\mathcal{S}_{i} +\exp\Big(-\frac{\lambda^2\bm{\omega}_{i}^{T}\bm{\omega}_{i}}{2}\Big)\times\\\nonumber\frac{1}{i}\frac{\sqrt{2\pi\lambda^2}}{M}\sum\limits_{j=1}^{M}
\cos(\bm{\omega}_{i}^{T}\mathbf{x}_{j}+b_{n_{G}}),  
\end{gather}
where $M$ is the size of the considered batch. This gives us a new smoothed RFF that can present potentially useful features for low-complexity DNN architectures.
\section{Proof of viability of distribution dependent RFFs}
In the Section. III, we demonstrated the viability of RFF based LSTMs in terms of the mis-classification error/the ``hit-or-miss" cost function. In this section, we prove that, compared to classical RFFs, the proposed distribution dependent RFFs provide a better approximation to the RKHS by reducing the approximation-error (which is a ``soft" error-metric). We state this claim in the form of the following two theorems.
\begin{thm}
The mapping given by the distribution-dependent RFF map is closed under the same RKHS $\mathcal{H}$ as the classical RFF. Furthermore, distribution dependent RFFs formulated in (\ref{DD_RFF}) deliver an improved approximation of RKHS.
\end{thm}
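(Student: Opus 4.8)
The plan is to prove the two assertions separately: first the closure claim, which is structural, and then the improved-approximation claim, which is the quantitative core of the statement. For the closure assertion, I would begin by observing that the distribution-dependent map of (\ref{DD_RFF}) is built from exactly the same random cosine atoms $\cos(\bm{\omega}_{i}^{T}\mathbf{x}+b_{i})$, with frequencies $\bm{\omega}_{i}$ drawn from the \emph{same} spectral measure $\rho(\bm{\omega})$ as the classical RFF in (\ref{RFF}). The only modifications are that each coordinate is (i) attenuated by the deterministic scalar $\exp(-\lambda^{2}\bm{\omega}_{i}^{T}\bm{\omega}_{i}/2)$ and (ii) averaged against the Parzen estimate $\hat{p}(\mathbf{x})$ over the batch. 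Both operations are bounded linear maps acting on the span of the cosine atoms. Since $\mathcal{H}$ is, by Bochner's theorem (Theorem 1) together with the Representer theorem, the closure of that span, and since a bounded linear image of elements of this span remains inside its closure, the range of the distribution-dependent map lies in $\mathcal{H}$. Hence both maps live in the same RKHS, which settles the first claim.

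For the improved-approximation assertion, I would set up the approximation error as the mean-squared deviation between the target kernel $\kappa$ and its finite $n_{G}$-term estimate, measured under the data distribution, and carry out a bias--variance decomposition. The classical RFF of (\ref{RFF}) is a data-independent, unbiased Monte-Carlo estimator of $\kappa$, so its error is pure variance of order $O(1/n_{G})$. The distribution-dependent map of (\ref{DD_RFF}) instead integrates each cosine atom against $\hat{p}(\mathbf{x})$, which is equivalent to forming the empirical mean embedding of the data while attenuating each coordinate by the Gaussian spectral factor $\exp(-\lambda^{2}\bm{\omega}_{i}^{T}\bm{\omega}_{i}/2)$. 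I would show that both the batch-averaging under $\hat{p}$ and this spectral attenuation suppress precisely the high-frequency draws that dominate the sampling variance of the raw estimator, so that the per-coordinate variance is strictly reduced for every finite $n_{G}$.

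The main obstacle I anticipate is controlling the bias that the smoothing introduces: multiplying each coordinate by $\exp(-\lambda^{2}\bm{\omega}_{i}^{T}\bm{\omega}_{i}/2)$ convolves the data with a Gaussian of width $\lambda$ and therefore changes the effective kernel, so the distribution-dependent estimator no longer targets the same $\kappa$ exactly. The crux of the argument is thus a trade-off: I would bound the induced bias in terms of the Parzen bandwidth $\lambda$ (chosen via Silverman's rule, hence shrinking as the batch grows) and bound the variance reduction by the spectral suppression factor, then argue that in the low/medium-$n_{G}$ regime the variance gain dominates, so the total approximation error of the distribution-dependent map is strictly smaller than that of the classical RFF. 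Making this dominance rigorous rather than merely heuristic, and pinning down the precise range of $n_{G}$ and $\lambda$ for which the strict inequality holds, is where the real work lies.
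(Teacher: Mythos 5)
Your closure argument and the paper's arrive at the same place by essentially the same structural route: the paper decomposes each smoothed coordinate into $2\pi\lambda^{2}\mathcal{H}_{j}$ plus a residual and invokes closure of $\mathcal{H}$ under scaling, while you phrase it as a bounded linear image of the span of the cosine atoms staying inside the closure of that span; these are interchangeable. Where you genuinely diverge is on the second claim. The paper writes each component of (\ref{DD_RFF}) as an integral of $\bigl[\mathcal{H}_{j}+\epsilon_{n_{G_{j}}}(\bm{\omega})\bigr]$ against the weight $\exp\bigl(-\lambda^{2}\|\bm{\omega}\|^{2}/2\bigr)$ and then applies Parseval's theorem to argue that the weighted error term $\epsilon_{n_{G}}^{(1)}(\bm{\omega})$ has lower energy than $\epsilon_{n_{G}}(\bm{\omega})$ because the Gaussian factor low-pass filters the high-$\|\bm{\omega}\|$ contributions; it treats the whole question as one of attenuating a fixed error signal. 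You instead set up a bias--variance decomposition of the kernel-approximation error, identify the spectral attenuation and the Parzen averaging as variance-suppression mechanisms, and---crucially---flag that the same smoothing perturbs the target kernel and so introduces a bias that must be traded off against the variance gain. That obstacle is real, and the paper does not resolve it: its Parseval step shows the error component shrinks but is silent on the fact that the desired component is simultaneously rescaled by $2\pi\lambda^{2}$ (with $\lambda<1$ in Theorem 4's regime, this is itself an attenuation), so the \emph{relative} approximation quality is not actually controlled by the paper's argument either. In that sense your plan is the more honest of the two: it names the step on which the theorem actually hinges. But as submitted it is a plan rather than a proof---the claimed dominance of the variance gain over the Silverman-rate bias for low and medium $n_{G}$ is asserted, not established---so to be complete you would still need the quantitative bound you yourself identify as ``where the real work lies.'' If you can supply that bound, your route yields a strictly stronger and more defensible result than the published argument.
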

\begin{proof}
We begin by recalling that for a fixed number of $n_{G}$ dimensions, the approximation error $\epsilon$ for classical RFF is 
given by $n_{G}\sim O(\epsilon^{-2}\log\epsilon^{-2})$ \cite{qin2017random}. However, we also note that $\epsilon$ is not only a function of dimension $n_{G}$, but also a function of the set of $\bm{\omega}_{i}$. In other words, the approximation error depends on how much the samples of $\omega_{i}$ deviate from a Gaussian distribution; particularly, when $n_{G}$ is not high enough to converge to the desired Gaussian distribution. Hence, in the sequel, we denote the approximation error as $\epsilon_{n_{G}}(\bm{\omega})$, where $\bm{\omega}$ denotes an approximate continuum of $\{\bm{\omega}_{i}\}_{i=1}^{n
_{G}}$.

From (\ref{DD_RFF}), we can approximate each component of the proposed RFF as follows:
\small
\begin{gather}
\frac{\sqrt{2\pi\lambda^2}}{M}\sum\limits_{j=1}^{M}\int\limits_{\bm{\omega}}\exp\Big({-\frac{\lambda^2\|\bm{\omega}\|^2}{2}}\Big)\cos(\bm{\omega}^{T}\mathbf{x}_{j}+b)d\bm{\omega}.
\end{gather}
\normalsize
Noting the fact that the RFF can be expressed as the sum of an element in RKHS $\mathcal{H}$ with some error $\epsilon_{n_{G}}$, the above equation can be re-expressed as
\small
\begin{gather}
\frac{\sqrt{2\pi\lambda^2}}{M}\sum\limits_{j=1}^{M}\int\limits_{\bm{\omega}}\exp\Big({-\frac{\lambda^2\|\bm{\omega}\|^2}{2}}\Big)\Big[\mathcal{H}_{j}+\epsilon_{n_{G_{j}}}(\bm{\omega})\Big]d\bm{\omega},
\end{gather}
\normalsize
which can be simplified as
\small
\begin{gather}
2\pi\lambda^2\mathcal{H}_{j}+\\\nonumber\frac{\sqrt{2\pi\lambda^2}}{M}\sum\limits_{j=1}^{M}\underbrace{\int\limits_{\bm{\omega}}\exp\Big({-\frac{\lambda^2\|\bm{\omega}\|^2}{2}}\Big)\epsilon_{n_{G_{j}}}(\bm{\omega})d\bm{\omega}}_{\epsilon_{n_{G}}^{(1)}(\bm{\omega})}.
\end{gather}
\normalsize
From the above equation, we make the following observations:
\begin{itemize}
\item Applying Parseval's theorem, one can easily note that the energy of $\epsilon_{n_{G}}^{(1)}(\bm{\omega})$ is lower than $\epsilon_{n_{G}}(\bm{\omega})$, as the $\exp\Big({-\frac{\lambda^2\|\bm{\omega}\|^2}{2}}\Big)$ part performs a low-pass filtering, and attenuates higher-magnitude ``frequencies" (or $\bm{\omega}$).
\item One may note that the desired component, which is apart from $\epsilon_{n_{G}}(\bm{\omega})$, lies in the RKHS $\mathcal{H}$ (though it is scaled by $2\pi\lambda^2$, we use the closure property of $\mathcal{H}$ under scaling to prove closedness of the feature-map under the proposed distribution-dependent RFF).
\end{itemize}
\end{proof}
\begin{thm}
For the same error floor $\epsilon$, the number of RFFs required by the distribution dependent RFFs is lower by a factor $\mathcal{O}(\big(\frac{\lambda^{2}}{2\pi}\big)^{n})$.
\end{thm}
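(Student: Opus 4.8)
The plan is to reuse the known error-versus-dimension scaling of classical RFF together with the energy-reduction already established in the proof of Theorem 3, and then invert the resulting relation to read off the saving in the number of features. Concretely, I would start from the relation $n_G\sim\mathcal{O}(\epsilon^{-2}\log\epsilon^{-2})$ for classical RFF (recalled from \cite{qin2017random} in the proof of Theorem 3), which I rewrite schematically as $\epsilon^2\sim n_G^{-1}\log n_G$, i.e. the squared approximation error behaves, up to the logarithmic factor, like the reciprocal of the number of features. This is the ``baseline'' that the distribution-dependent map must be compared against at a common error floor $\epsilon$.

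Next I would quantify how the Gaussian weighting $\exp(-\lambda^2\|\bm{\omega}\|^2/2)$ of (\ref{DD_RFF}) rescales the error. As in the proof of Theorem 3, the distribution-dependent error $\epsilon_{n_G}^{(1)}(\bm{\omega})$ is the classical error $\epsilon_{n_G}(\bm{\omega})$ passed through this low-pass filter, so by Parseval its energy is strictly attenuated. Bundling the $\sqrt{2\pi\lambda^2}$ prefactors of (\ref{DD_RFF}) with the $n$-dimensional integral of the weighting, the controlling quantity is the normalising constant $(\lambda^2/2\pi)^{n/2}$ of the Gaussian density of covariance $\lambda^{-2}\mathbb{I}$; since the relevant metric is an energy (a squared error), the reduction enters as the square of this constant, namely the factor $(\lambda^2/2\pi)^n$. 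This single computation is what carries the input dimension $n$ into the final exponent.

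Finally I would equate error floors. Writing $\epsilon^2_{\mathrm{DD}}(n_G)=(\lambda^2/2\pi)^n\,\epsilon^2_{\mathrm{cl}}(n_G)$ and demanding that the distribution-dependent map reach the same target $\epsilon$ as the classical one, I substitute $\epsilon^2_{\mathrm{cl}}(n_G)\sim n_G^{-1}\log n_G$ and solve for the required number of features. Because the logarithmic term is common to both maps to leading order, the factor $(\lambda^2/2\pi)^n$ survives the inversion untouched, giving $n_G^{\mathrm{DD}}\sim(\lambda^2/2\pi)^n\,n_G^{\mathrm{cl}}$, which is exactly the claimed reduction and is a genuine saving whenever $\lambda^2<2\pi$.

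I expect the main obstacle to be pinning down the exponent $n$ rigorously, as opposed to $n/2$ or $n/4$: the power depends on whether the filter attenuation is measured through the squared normalising constant of the associated Gaussian density or through the $L^2$ energy of the filter itself, and on correctly collecting the two $\sqrt{2\pi\lambda^2}$ prefactors inherited from (\ref{DD_RFF}). A secondary subtlety is that $\epsilon_{n_G}(\bm{\omega})$ is only a coarse, dimension-dependent surrogate for the true approximation error, so the Parseval step is really a bound rather than an identity; I would therefore state the conclusion as an order-of-magnitude ($\mathcal{O}$) estimate rather than an exact equality, consistent with the $\mathcal{O}(\cdot)$ notation in the theorem.
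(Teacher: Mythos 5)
Your proposal follows essentially the same route as the paper: both invoke the classical scaling $n_{G}\sim\mathcal{O}(\epsilon^{-2}\log\epsilon^{-2})$, both fold the Gaussian weighting of (\ref{DD_RFF}) into a single energy-attenuation constant $\mathcal{K}^{2}=(\lambda^{2}/2\pi)^{n}$ (the paper via a Cauchy--Schwarz step yielding $\mathcal{K}=[\lambda^{2}/2\pi]^{n/2}$, you via the squared normalising constant of the associated Gaussian density), and both equate error floors and discard the logarithmic correction as $\epsilon^{2}\to 0$ to obtain the ratio $(\lambda^{2}/2\pi)^{n}$. Your closing caveats about the exponent and the Parseval step being a bound rather than an identity are apt, since the paper's own argument is no more rigorous on those points.
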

\begin{proof}
It is noted that upon using the distribution dependent RFFs, and upon invoking the Cauchy-Schwarz inequality, the number of RFF dimensions required for an error floor $\mathcal{K}\epsilon$, is written as follows:
\begin{gather}
n_{G}^{1}=
\mathcal{C}\mathcal{K}^{-2}\epsilon^{-2}\log(\mathcal{K}^{-2}\epsilon^{-2}),
\end{gather}
where $\mathcal{C}$ is an arbitrary constant, and $\mathcal{K}=\Big[\frac{\lambda^2}{2\pi}\Big]^{\frac{n}{2}}<1$ is a scaling constant making the error floors of RFF and distribution-dependent RFF same to facilitate comparison. Similarly, for distribution-dependent RFFs, the dimensions $n_{G}$ can be quantified as
\begin{gather}
n_{G}^{2}=\mathcal{C}\epsilon^{-2}\log(\epsilon^{-2}).
\end{gather}
Hence, the ratio may be expressed as
\begin{gather}
\frac{n_{G}^{1}}{n_{G}^{2}} = \frac{\lambda^{2n}}{(2\pi)^{n}}\Big[1+\frac{\epsilon^{2}n_{G}n}{4}\log\Big(\frac{\lambda^2}{2\pi}\Big)\Big],
\end{gather}
Noting that $\epsilon^{2}\approx 0$ for sufficiently large RFF dimensions, we can write
\begin{gather}
\frac{n_{G}^{1}}{n_{G}^{2}} \approx \frac{\lambda^{2n}}{(2\pi)^{n}}.
\end{gather}
It can be noted that since $\lambda<1$, which proves the desired result.
\end{proof}
\section{Architecture of RFF based LSTM}
Considering the results presented in Theorem 2, wherein the viability of the proposed RFF based features to an LSTM for sequence detection is established,  we describe the neural network architecture considered in this work. For illustration purposes, the proposed architecture is shown in Fig. \ref{fig0}.
\begin{figure*}[!htbp]
\centering
  \includegraphics[width=0.8\linewidth,height=1\linewidth]{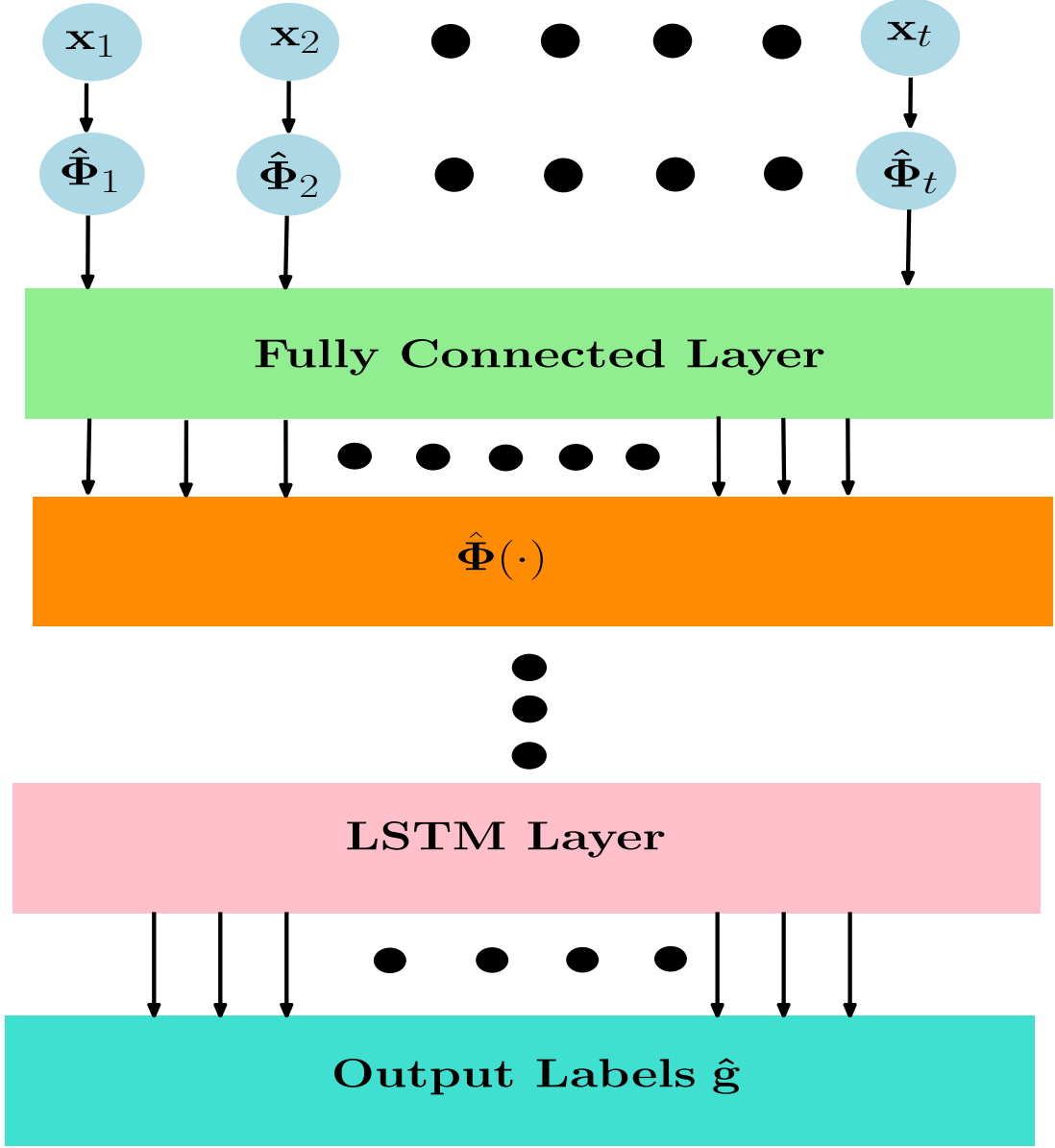}\\
  \caption{Depiction of LSTM based prediction using RFF.}\label{fig0}
\end{figure*}
As detailed in Theorem 2, the input is mapped to an approximate RKHS using RFFs as outlined in (\ref{RFF}) or (\ref{DD_RFF}). Next, there are optional fully connected layers with a subsequent RFF transformation. The cascade of consecutive mapping renders the overall mapping to a sequence of RKHSs and the result in Theorem 2, can be readily utilized by replacing $\mathcal{H}$ with the last RKHS. Lastly, the observations in RKHS are presented as an input to the LSTM layer for prediction of the NLOS/LOS labels. Notably, the benefits of the proposed mapping prior to presentation to the LSTM layer have been highlighted in Theorem 2, which indicates that the posterior is more ``peaked" given RFFs as input to the LSTM, as compared to the indigenous observations. These steps can be summarized in the following set of equations:
\begin{gather}
\label{proposed_LSTM}
\bm{f}_{t} = \sigma_{g}(\bm{\mathcal{B}}^{(1)}\mathbf{\hat{\Phi}}(\mathbf{x}_{t})+\bm{\mathcal{B}}^{(2)}\bm{\gamma}_{t-1}+\bm{b}_{f})\\\nonumber
\bm{i}_{t} = \sigma_{g}(\bm{\mathcal{A}}^{(1)}\mathbf{\hat{\Phi}}(\mathbf{x}_{t})+\bm{\mathcal{A}}^{(2)}\bm{\gamma}_{t-1}+\bm{b}_{i})\\\nonumber
\bm{o}_{t} = \sigma_{g}(\bm{\mathcal{D}}^{(1)}\mathbf{\hat{\Phi}}(\mathbf{x}_{t})+\bm{\mathcal{D}}^{(2)}\bm{\gamma}_{t-1}+\bm{b}_{o})\\\nonumber
\bm{\chi}_{t} = \bm{\chi}_{t-1}\circ\bm{f}_{t}+\bm{i}_{t}\circ\sigma_{c}(\bm{\mathcal{F}}^{(1)}\mathbf{\hat{\Phi}}(\mathbf{x}_{t})+\cdots\\\nonumber\bm{\mathcal{F}}^{(2)}\mathbf{h}_{t-1}+\bm{b})\\\nonumber
\bm{\gamma}_{t}=\bm{o}_{t}\circ\sigma_{h}(\bm{\chi}_{t}),
\end{gather}
where $\circ$ denotes the Hadamard product, $\sigma_{g}(\cdot)$ is a sigmoid activation function, $\sigma_{c}(\cdot)$ and $\sigma_{h}(\cdot)$ denote the $\tanh(\cdot)$ activation function, and  $\bm{f}_{t},\bm{i}_{t},\bm{o}_{t},\bm{\chi}_{t},\bm{\gamma}_{t}\in\mathbb{R}^{n_{h}}$ where $n_{h}$ denotes number of hidden nodes. Furthermore, 
$\bm{f}_{t}$ denotes the forget-gate of LSTM with weights $\bm{\mathcal{B}}^{(1)}\in\mathbb{R}^{n_{h}\times n_{G}},\bm{\mathcal{B}}^{(2)}\in\mathbb{R}^{n_{h}\times n_{h}}$ and bias $\bm{b}_{f}\in\mathbb{R}^{n_{h}}$). $\bm{i}_{t}$ denotes the input gate with weights $\bm{\mathcal{A}}^{(1)}\in\mathbb{R}^{n_{h}\times n_{G}},\bm{\mathcal{A}}^{(2)}\in\mathbb{R}^{n_{h}\times n_{h}}$ and bias $\bm{b}_{i}\in\mathbb{R}^{n_{h}}$. $\bm{o}_{t}$ denotes the output gate with weights $\bm{\mathcal{D}}^{(1)}\in\mathbb{R}^{n_{h}\times n_{G}},\bm{\mathcal{D}}^{(2)}\in\mathbb{R}^{n_{h}\times n_{h}}$ and bias $\bm{b}_{o}\in\mathbb{R}^{n_{h}}$. $\bm{\chi}_{t}$ is the sum of the gating of the previous parameter-value with the forget-gate and the gating between input nodes and input-layer
(parameterized by weights $\bm{\mathcal{F}}^{(1)}\in\mathbb{R}^{n_{h}\times n_{G}},\bm{\mathcal{F}}^{(2)}\in\mathbb{R}^{n_{h}\times n_{h}}$ and bias $\bm{b}\in\mathbb{R}^{n_{h}}$), which is passed through the activation function $\sigma_{c}(\cdot)$ and gated with the output gate $\bm{o}_{t}$ to obtain the next-state $\bm{\gamma}_{t}\in\mathbb{R}^{n_{h}}$. The aforementioned weights and biases are optimized using the backpropagation algorithm; however, we do not encounter the vanishing/exploding gradient problem for time-series prediction. In addition, the mapping to an approximate RKHS $\bm{\hat{\Phi}}(\cdot)$ (as derived in (\ref{proposed_LSTM})), prior to input to LSTM increases the accuracy of predictions as inferred from Theorem. 2.
 
In the next section, we present simulations, which validate the viability of the proposed RFF based LSTM for two problems. First, LOS/NLOS classification over an outdoor WINNER II channel. Next, the suitability of the proposed distribution-dependent RFF for LDPC decoding is demonstrated by realistic simulations over nonlinear VLC channels. 
\section{Simulations}
In this section, we present simulations to validate the paradigm of RFF based learning for LOS/NLOS based classification and message-passing based detection for LDPC decoding. From the simulations presented below, one can observe significant gains for various nonlinear classification problems when using RFF-approximations of an RKHS, compared to using the indigenous observations.
\subsection{RFF-based LSTM for LOS/NLOS classification}
In this subsection, simulations are presented for LOS/NLOS identification in outdoor communication systems. The simulation parameters are summarized in Table. \ref{eval_table}. We consider various outdoor WINNER II channel scenarios, wherein there is a single base-station, and the receiver moves from an initial location using a random-walk mobility model. The antenna height at the transmitter is assumed to be 4m, while the mobile stations were assumed to move along a trajectory in the horizontal plane drawn from a 2D random-walk mobility model. The OFDM standard assumed is IEEE 802.11ax with a guard band of 3.2 $\mu$s. The complex channel-estimates at the receiver are transformed by concatenating real and imaginary components prior to RFF mapping.
In this sections, we present the following two comparison-cases: a) the case in which we present $\mathbf{x}$ directly to the LSTM layer and b) the case in which $\mathbf{x}$ is mapped to an RKHS using RFF layer(s) prior to the LSTM layer.
The candidate neural networks were trained with the initial location of the receiver as mentioned in Scenario I of Table \ref{eval_table}\footnote{From our simulations, we also note that the proposed approach works for an arbitrary initial point; however, we show some "extreme" skewed cases for brevity}. It is worthwhile to mention that the initial location of the user for Scenario I was chosen heuristically such that there are almost equal number of examples of LOS and NLOS observations, as the user moves along the aforementioned random-walk based trajectory. Consequent to training, the candidate neural networks were tested on the testing-observations derived from Scenario II (which has more LOS labels due to the initial location being near to the base station), and Scenario III (which has more NLOS labels due to the initial location being far away from the base station). The testing $F_{1}$-score is plotted for the C1 and D1 outdoor scenarios. It is observed that the gains in $F_{1}$-score performance becomes more prominent in the low training-data regime, as seen from Fig. \ref{fig3} and Fig. \ref{fig4}, which makes  distribution dependent RFFs better suited for rapidly fluctuating outdoor scenarios. Further, from the simulated receiver operating characteristics (ROC), which are plotted in Fig. \ref{fig5} and Fig. \ref{fig51} for 400 training samples, better performance is obtained from the RFF based LSTM than the generic LSTM, which is in line with the gains promised by the analytical results derived previously.
\begin{table*}
\caption{Simulation Parameters}
\begin{center}
\begin{tabular}{|c|c|}
\hline
Outdoor channel model & Winner-II \\
\hline
Mobility model & Random walk  \\
\hline
Guard Band & 3.2 $\mu$s  \\ 
\hline
OFDM Standard & IEEE 802.11ax  \\
\hline
Antenna Height at Tx & 4m \\
\hline
Coordinates of base-station & (50,150)\\
\hline
(Training): Initial Rx coordinates C2 & (200,120)\\ 
\hline
(Training): Initial Rx coordinates C1 & (60,200)\\ 
\hline
(Testing): Initial Rx coordinates C2& (300,250),(60,200)\\
\hline
(Testing): Initial Rx coordinates C1& (300,250),(450,500)\\
\hline
[Urban scenario]: LOS/NLOS probability for C1 & \cite[eq. (1)]{JG},\cite[Table. 4.7]{bultitude20074}\\
\hline
[Suburban scenario]: LOS/NLOS probability for C2& \cite[eq. (2)]{JG},\cite[Table. 4.7]{bultitude20074}\\
\hline
Number of RFF layers& 1\\
\hline
$\mathbf{x}$ & Sequence of Channel-estimates\\
\hline
RFF dimensions for input layer & 200 \\
\hline
Number of LSTM layers & 1 \\
\hline
Number of Hidden nodes for LSTM layer & 50\\
\hline
Considered WINNER II Scenarios (Suburban) & C1, D1 \\
\hline
Test-data sequence length & 6000\\
\hline
\end{tabular}
\label{eval_table}
\end{center}
\end{table*}
\subsection{LDPC decoding for VLC}
In this subsection, we describe our methodology for LDPC decoding in a nonlinear VLC channel. We assume LOS VLC channel modelled by a Lambertian radiation pattern \cite{ghassemlooy2017visible,mitra2017precoded}, with a memory Rapp LED nonlinearity (which is widely used for modelling a white light-emitting diode (LED) \cite{elgala2010led,jain2019klms}). The overall system model at the $i^{th}$ time-instant can be written as
\begin{gather}
\mathbf{y}_{i} = h_{i}f(\mathbf{x}_{i}+\alpha \mathbf{x}_{i-1}) + \mathbf{n},
\label{sysmod}
\end{gather}
where $\mathbf{n}\sim\mathcal{N}(0,\sigma_{n}^{2}\mathbb{I})$, with $\sigma_{n}^{2}$ denoting the overall variance of the additive noise which accounts for the overall effect of shot-noise and ambient noise at the photodetector. Moreover, $x_{i}$ denotes encoded independent and identically distributed (i.i.d) on-off keying (OOK) transmissions, and $f(\cdot)$ denotes the LED transfer characteristic modelled as an AM-AM Rapp nonlinearity. The encoding is performed according to the 802.11n LDPC generator-matrix. From (\ref{sysmod}), one can observe that the nonlinearity $f(\cdot)$ warps/distorts the transmitted codewords $\mathbf{x}$, which alters their algebraic structure, and causes errors in LDPC decoding. Hence, based on the previous discussion, it is proposed to learn the bits based on a detector trained on $\bm{\hat{\Phi}}(\mathbf{y})$, where the RFF dimensions are equal to the codeword-length\footnote{It is also possible to up-convert to higher dimensions using an RFF and then down-convert using an autoencoder. Though this dual-conversion may have performance benefits, it is computationally complex, and hence we focus our attention on the single-layer case.}. In fact, using the Representer theorem, (\ref{sysmod}) can be re-written as
\begin{gather}
\mathbf{y} = <k_{x},\mathbf{x}>_{\mathcal{H}} + \mathbf{n},
\end{gather}
where $k_{x}$ is an operator in RKHS $\mathcal{H}$.
Using the completeness of the RKHS $\mathcal{H}$, there exists an operator $k_{y}$ such that
\begin{gather}
<k_{y},\mathbf{y}>_{\mathcal{H}} \approx \mathbf{x} + <k_{y},\mathbf{n}>_{\mathcal{H}},
\end{gather}
which can be modelled as an AWGN channel with noise-variance equal to$$\text{var}[<k_{y},\mathbf{y}>_{\mathcal{H}}]=<k_{y},k_{y}>_{\mathcal{H}}\sigma_{n}^{2}.$$ 
Given the theme of this work, the following approximation of RKHS is utilized
$$<k_{y},\mathbf{y}>_{\mathcal{H}}\approx\bm{\Omega}^{T}\bm{\hat{\Phi}}(\mathbf{y}).$$
Next, a hypothesis, denoted by $\bm{\Omega}$, is trained on 
$\bm{\hat{\Phi}}(\mathbf{y})$, which optimizes the following quadratic loss-function $$\|\bm{\Omega}^{T}\bm{\hat{\Phi}}(\mathbf{y})-\text{sign}(\mathbf{y})\|_{2}^2$$
Channel-decoding is performed using message-passing over a Tanner graph representation of the parity-check matrix \cite{johnson2010iterative}. We denote the graph-neighorbood of node $k$ as $\mathcal{B}_{k}$ (which is the set of points incident on node $k$ in the Tanner-graph apart from $k$ itself). Additionally, for the $j^{th}$ bit, the log-likelihood-messages from the bit-nodes to codewords, and the message from codewords to bit-nodes are denoted as $m_{b}(j)$ and $m_{c}(j)$ respectively.  Lastly, we denote the length of the bit-string $\mathbf{b}$ as $B$ and the size of the encoded codeword as $C$. The algorithm is summarized in Algorithm 1.
\begin{algorithm}
	\caption{Message-Passing in RKHS} 
	\begin{algorithmic}[1]
	\State Initialize $\text{MAXITER}$.
	\For {$k=1,2,\ldots C$}
		\State Initialize LLR: $m_{c}(k)=\frac{-2 y(k)}{\sigma_{n}^{2}}$. 
		\EndFor
	\While{$\text{cnt}<\text{MAXITER}$}
	\Repeat 
		
			\For {$j=1,2,\ldots,B$}
				\State $m_{b}(j) = \log\Bigg[\frac{1+\prod\limits_{k\in\mathcal{B}_{j}}\tanh(m_{c}(k))}{1-\prod\limits_{k\in\mathcal{B}_{j}}\tanh(m_{c}(k))}\Bigg]$ 
			\EndFor
			\For {$j=1,2,\ldots,B$}
				\State $m_{b}^{'}(j) = \sum\limits_{j\cup j\in\mathcal{B}_{j}}m_{c}(j)-\frac{2y(k)}{\sigma_{n}^{2}}$ 
\State $\hat{b}(j)=\text{sign}(m_{b}^{'}(j))$.
			\EndFor
			\State Calculate syndome $\bm{e} = \mathbf{H}\hat{\mathbf{b}}$.
		\For {$j=1,2,\ldots,C$}
				\State Update: $m_{b}(j) = \sum\limits_{j\in\mathcal{B}_{j}}m_{c}(j)-\frac{2y(k)}{\sigma_{n}^{2}}$ 
			\EndFor	
			\Until{$\bm{e}=\bm{0}$}
			\State  Estimate a map $\bm{\Omega}$ such that
			minimizes $\|\bm{\Omega}^{T}\bm{\hat{\Phi}}(\mathbf{y})-\text{sign}(\mathbf{y})\|^{2}$.
			\State Estimate variance: $\hat{\sigma}^{2} = \text{var}[\bm{\Omega}^{T}\bm{\hat{\Phi}}(\mathbf{y})]$.
			\For {$k=1,2,\ldots C$}
		\State Re-initialize LLR: $m_{c}(k)=m_{c}(k)+\frac{-2 \bm{\Omega}^{T}\bm{\hat{\Phi}}(\mathbf{y})}{\hat{\sigma}^{2}}$. 
		\EndFor
		\State Update regressors: $\mathbf{y} = \bm{\Omega}^{T}\bm{\hat{\Phi}}(\mathbf{y})$.
			\EndWhile
	\end{algorithmic} 
\end{algorithm}
\begin{figure}[!htbp]
\centering
  \includegraphics[width=1\linewidth,height=7cm]{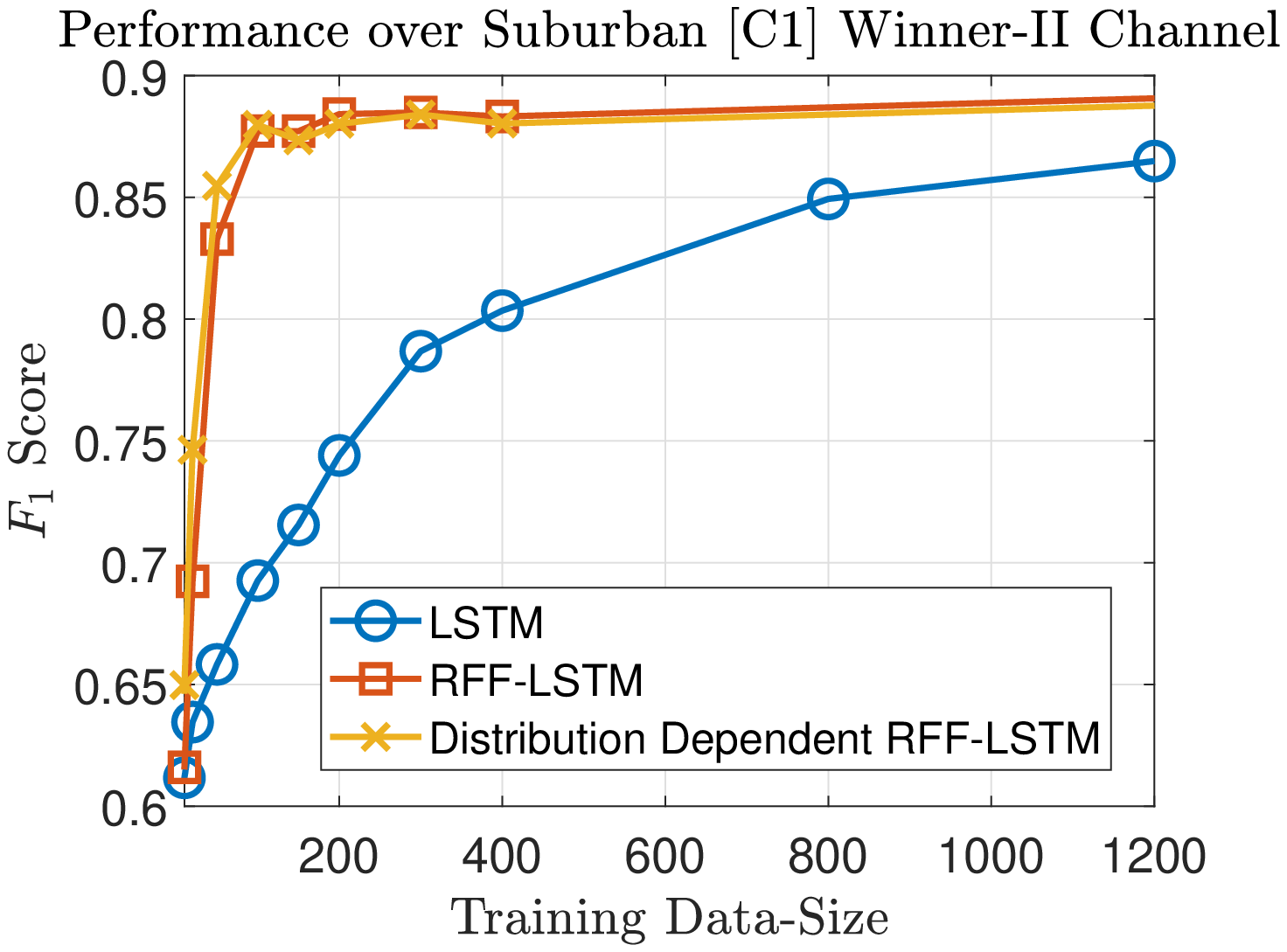}\\
  \caption{F1-score obtained in the LOS/NLOS classification for LSTM, RFF-LSTM, and the Distribution dependent RFF-LSTM over urban scenario C1.}\label{fig3}
\end{figure}
\begin{figure}[!htbp]
\centering
  \includegraphics[width=1\linewidth,height=7cm]{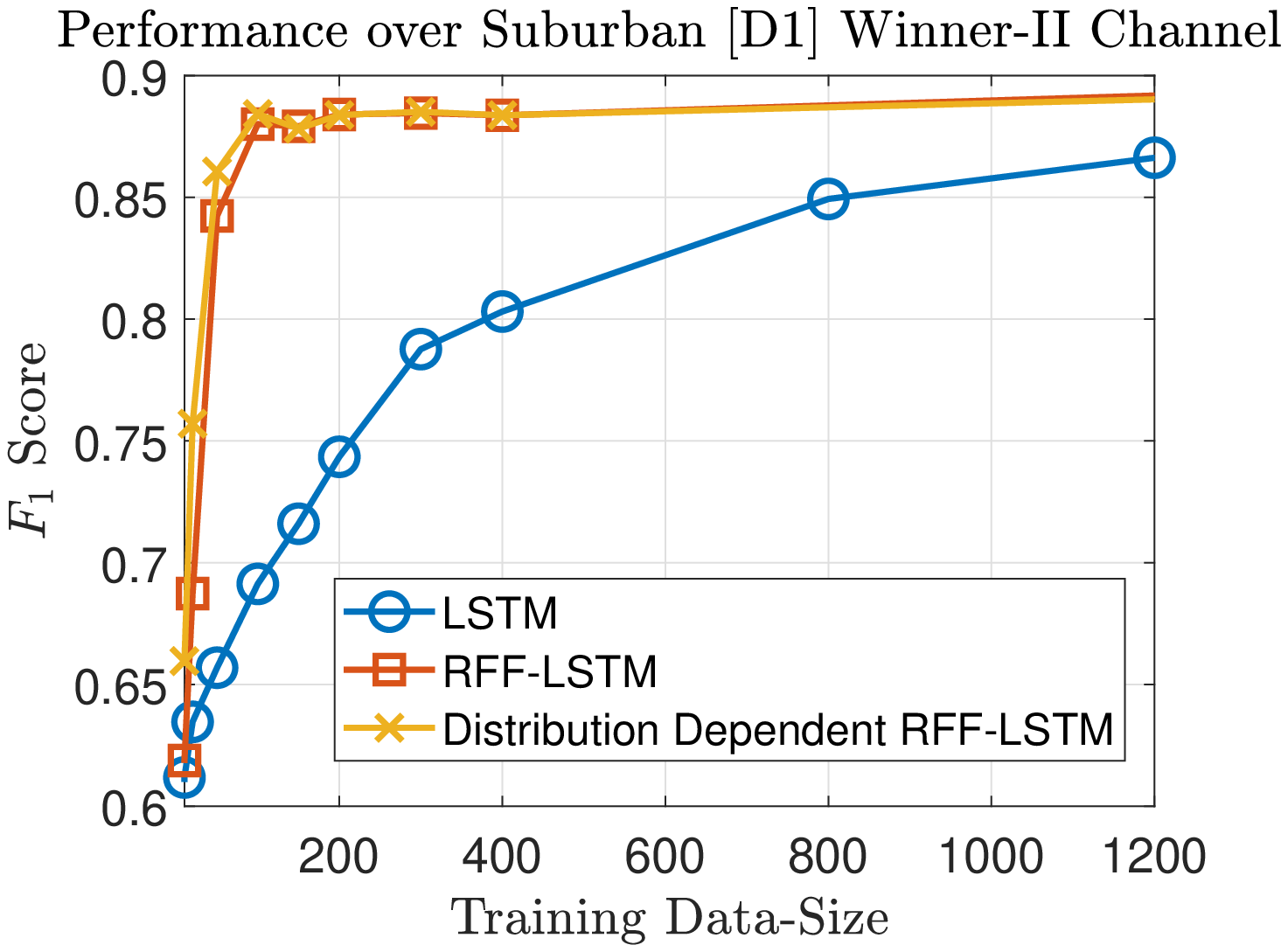}\\
  \caption{F1-score obtained in the LOS/NLOS classification for LSTM, RFF-LSTM, and the Distribution dependent RFF-LSTM over suburban scenario D1.}\label{fig4}
\end{figure}
\begin{figure}[!htbp]
\centering
  \includegraphics[width=1\linewidth,height=7cm]{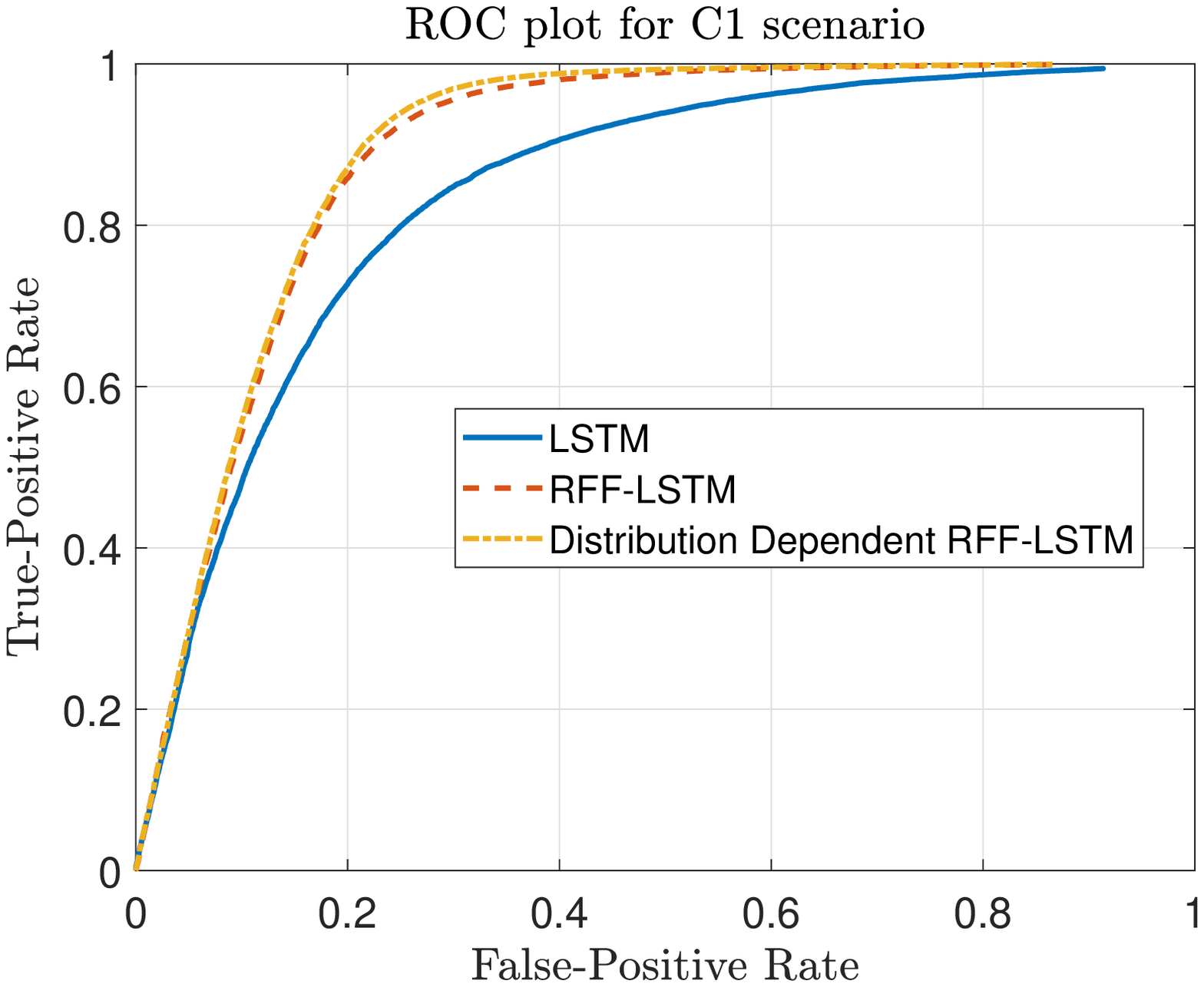}\\
  \caption{ROC curves obtained in the LOS/NLOS classification for LSTM, RFF-LSTM, and the Distribution dependent RFF-LSTM over suburban scenario C1 with a training-size of 400 observations.}\label{fig5}
\end{figure}
\begin{figure}[!htbp]
\centering
  \includegraphics[width=1\linewidth,height=7cm]{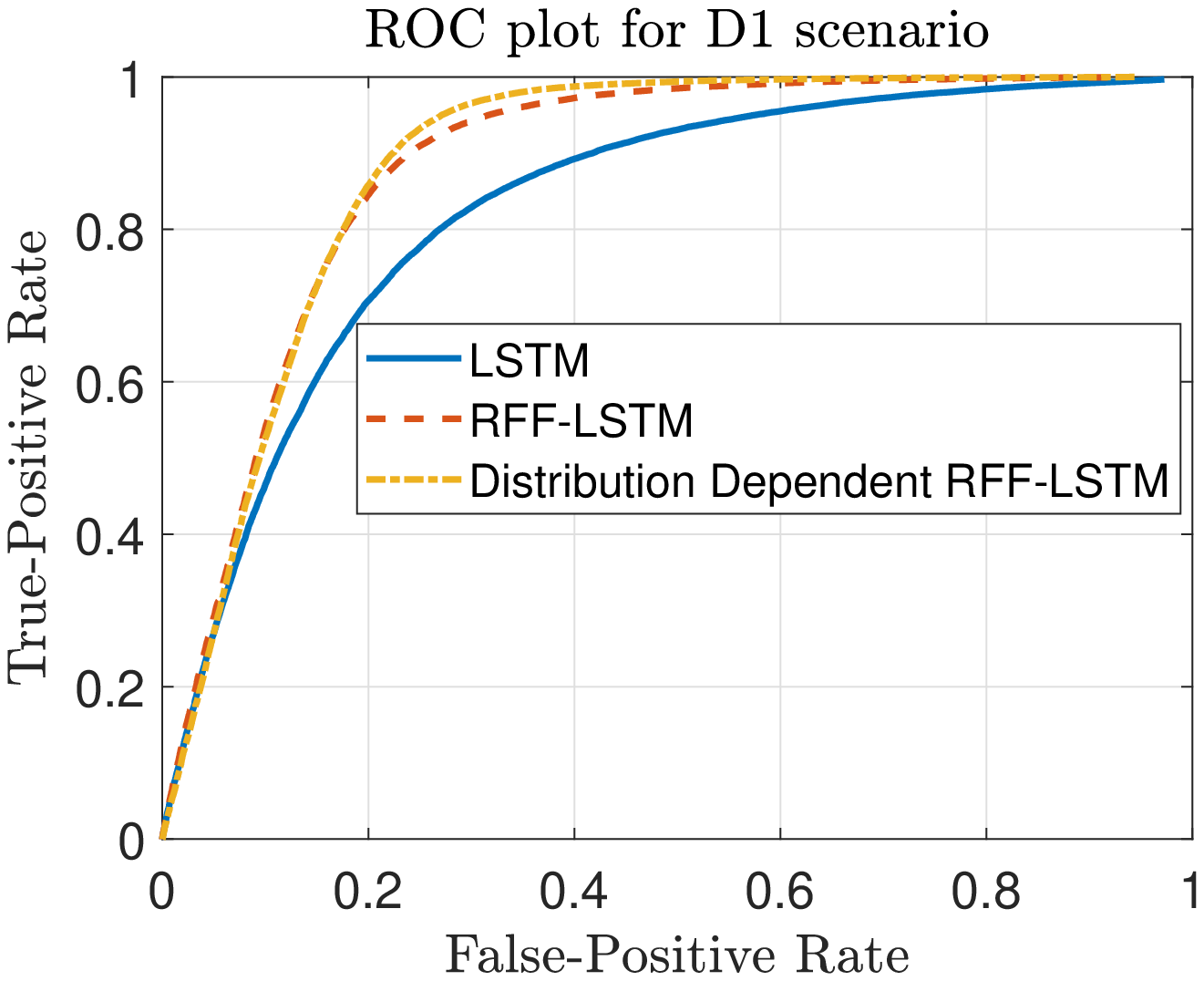}\\
  \caption{ROC curves obtained in the LOS/NLOS classification for LSTM, RFF-LSTM, and the Distribution dependent RFF-LSTM over suburban scenario D1 with a training-size of 400 observations.}\label{fig51}
\end{figure}
\begin{figure}[!htbp]
\centering
  \includegraphics[width=1\linewidth,height=7cm]{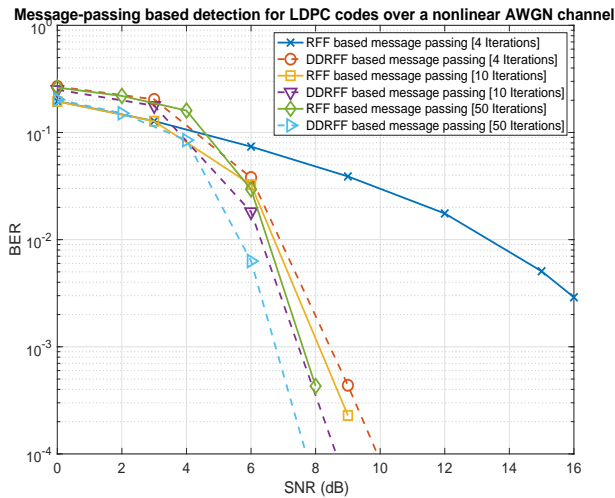}\\
  \caption{BER of LDPC decoding with message-passing over nonlinear VLC channels with: a) RFF and b) distribution-dependent RFF.}\label{fig6}
\end{figure}

A typical LOS VLC scenario is considered as in \cite{jain2019klms}, with Rapp LED nonlinearity, where the memory-parameter of the nonlinearity, $\alpha$, is considered to be 0.2, and the saturation current of the LED is 0.4. The generator matrix for the LDPC code is derived following the IEEE 802.11n standard with a block length of 648 \cite{910579,1255474}. From the BER results presented in Fig. \ref{fig6}, it can be inferred that the proposed distribution dependent RFF based message-passing outperforms the classical RFF based message-passing in terms of BER performance. Notably, the gains in BER performance are higher at lower number of outer iterations; though it is noted that even upon increasing the number of iterations to a very high value (like 50), we still get a significant gain with the proposed distribution-dependent RFFs, as seen in Fig. \ref{fig6}.
\section{Conclusion}
In this work, analytical results motivating the paradigm of RFF based DL are presented and a novel distribution-dependent RFF is proposed. The validity of the presented analysis and the proposed distribution-dependent RFF are ratified through realistic computer-simulations for critical machine-learning problems encountered in next-generation wireless communications, such as LOS/NLOS identification and LDPC decoding over nonlinear VLC channels. Simulations performed over realistic WINNER II outdoor channels validate the analytical proofs and indicate that the proposed neural network architecture delivers significant gains over classical LSTMs for LOS/NLOS identification, which makes the proposed methodology viable. Lastly, the worth of traditional RFF and the proposed distribution-dependent RFF maps are compared for message-passing based LDPC detection. In line with the derived theoretical results, the simulations indicate a significant performance gain upon deployment of the proposed distribution-dependent RFF, which enforces the usefulness of distribution-dependent RFFs for machine-learning applications for generic communication systems.
\section*{Acknowledgement}
This work was supported by the ULTRA TCS research chair on intelligent tactical wireless networks for challenging environments, and by the grant number CRDPJ 538896-19 from the National Natural Sciences and Engineering Research Council of Canada (NSERC).
\bibliographystyle{ieeetr}
\bibliography{./paper}
\end{document}